

\documentclass[letterpaper, 10 pt, conference]{ieeeconf}  

\IEEEoverridecommandlockouts                              
\overrideIEEEmargins

\usepackage{amssymb}
\usepackage{amsmath}
\usepackage{graphics}
\usepackage{graphicx}
\usepackage{epsfig}
\usepackage{epstopdf}
\usepackage{mathrsfs}
\usepackage{cite}
\usepackage[mathscr]{euscript}
\usepackage{xcolor}
\usepackage{hyperref}

\newcommand{\enn}{\mathbb{N}}
\newcommand{\Si}{S_i}
\newcommand{\Sidot}{\dot{S}_i}
\newcommand{\Ii}{I_i}
\newcommand{\Iidot}{\dot{I}_i}
\newcommand{\Ri}{R_i}
\newcommand{\Ridot}{\dot{R}_i}

\newcommand{\xess}{x^{\rm ess}}
\newcommand{\ess}{{\rm ESS}}

\newcommand{\Rinf}{R^\infty}

\newcommand{\Js}{J^{\rm s}}

\newcommand{\hs}[1][0.5]{\hspace{-#1mm}}








\newcommand{\poa}{\ensuremath{{\rm PoA}}}












\hyphenation{op-tical net-works semi-conduc-tor out-performs}

\newtheorem{theorem}{Theorem}[section]

\newtheorem{proposition}[theorem]{Proposition}
\newtheorem{lemma}[theorem]{Lemma}

\newtheorem{definition}{Definition}


\begin{document}

\title{\LARGE \bf
Individual Altruism Cannot Overcome Congestion Effects in a Global Pandemic Game
}

\author{Philip N. Brown, Brandon Collins, Colton Hill, Gia Barboza, and Lisa Hines
\thanks{This work was supported by the National Science Foundation under Grants \#DEB-2032465 and \#ECCS-2013779.}
\thanks{The authors are with the University of Colorado at Colorado Springs, CO 80918, USA.
	{\tt\small \{philip.brown,bcollin3, chill13,gbarboza,lhines\}@uccs.edu}}}

\maketitle

\begin{abstract}
A key challenge in responding to public health crises such as COVID-19 is the difficulty of predicting the results of feedback interconnections between the disease and society.
As a step towards understanding these interconnections, we pose a simple game-theoretic model of a global pandemic in which individuals can choose where to live, and we investigate the global behavior that may emerge as a result of individuals reacting locally to the competing costs of isolation and infection.
We study the game-theoretic equilibria that emerge from this setup when the population is composed of either selfish or altruistic individuals.
First, we demonstrate that as is typical in these types of games, selfish equilibria are in general not optimal, but that all stable selfish equilibria are within a constant factor of optimal.
Second, there exist infinitely-many stable altruistic equilibria; all but finitely-many of these are worse than the worst selfish equilibrium, and the social cost of altruistic equilibria is unbounded.
Our work is in sharp contrast to recent work in network congestion games in which all altruistic equilibria are socially optimal.
This suggests that a population without central coordination may react very poorly to a pandemic, and that individual altruism could even exacerbate the problem.
\end{abstract}




\section{Introduction}

One of the chief challenges inherent in predicting the spread of an infectious disease such as COVID-19 is the difficulty in predicting the effects of endogenous societal responses to the outbreak~\cite{Gevertz2020}.
As an infectious disease spreads in society, the members of society may be expected to react to the disease in ways which mitigate its spread~\cite{Franco2020a}.
Simultaneously, if members of society practice effective social distancing, mask-wearing, and other mitigation techniques, these actions affect the rate at which the disease spreads.
In effect, both processes (the spread of the disease and the spread of social behavior) can be viewed as negative feedbacks on each other, and can lead to highly complex dynamics which can be difficult to model, and perhaps even more difficult to control and influence.

To focus on one particular aspect of the disease/society feedback interaction, this paper poses a simple game-theoretic model of a pandemic to study the interacting effects of individually-directed isolation and the propagation of infectious disease.
Our paper adopts a nonatomic game formulation, which models a large population of individuals as a continuum of agents, with each of the infinitely-many individuals having an infinitesimal effect on those around her.
We allow each individual to choose to live at one of a large set of \emph{locations}; for simplicity we consider the case that each individual selects a location and remains there for the duration of the pandemic.
The competing effects of disease risk and personal isolation influence individual decisions: 
\begin{enumerate}
\item Selecting a densely-inhabited location confers a high risk of contracting the infectious disease but a low cost of isolation;
\item Selecting a sparsely-inhabited location confers a low risk of contracting the disease, but a high personal cost of isolation.
\end{enumerate}
We view this model as a testbed for the notion that in a pandemic, it is not good to have either too little isolation (potentially leading to rapid disease spread) or too much isolation (leading to economic ruin, psychological problems, etc.~\cite{Mohler2020,VanGelder2020a,Usher2020}). 
Note that our setting contains many synergies with the population game models~\cite{Sandholm2009} and other game-theoretic and networked approaches in epidemiology~\cite{Hota2019a,Hota2020,Vrabac2021}.

In this paper, we initiate a study on the aggregate effects of uncoordinated individual choice in such a setting, considering in particular the evolutionarily stable states (ESS) associated with a game-theoretic formulation of the above concepts.
Our core initial finding is that if all individuals are self-interested, then every equilibrium has only a (relatively) small number of inhabited locations, and that in general the quality of equilibria can vary considerably even within a single instance of the game.
In particular, in every instance of the problem (even those with no disease), it is an equilibrium for self-interested individuals to have a single inhabited location which contains all individuals.
However, selfish equilibria reliably never err too far on the side of \emph{much} isolation; in essence, a society comprised of only selfish individuals acts as though it prefers sickness over loneliness.
Furthermore, we show that the worst-case selfish equilibria can never be more than a constant factor worse than the optimal social allocation --- that is, the price of anarchy is bounded~\cite{Papadimitriou2001}.

Following our initial study on selfish behavior, we investigate the effects of \emph{altruistic} behavior; in this context, we adopt the view of~\cite{Chen2014} that an altruistic individual attempts to minimize her total social cost; that is, her own individual cost plus the external cost which she imposes on those around her.
Intuitively, an altruistic individual is reluctant to live at a densely-populated location because she recognizes that doing so puts her at risk for being a vector of the disease.
First, we show that when all individuals are altruistic, it is never an equilibrium for the entire population to inhabit a single location, provided that the cost of isolation is not too high; that is, altruists successfully avoid density.
However, we then demonstrate a counterintuitive result which shows that altruists may err quite aggressively on the side of isolation, even when they are correctly calculating and responding to the true marginal costs of their actions.
In fact, every instance of this problem (even if there is no disease) has infinitely-many altruistic ESS, all but finitely-many of which are worse than the worst selfish ESS.

The paper's key message is that in this setting, individual myopic altruism (even when uniform throughout society and computed correctly) is not a reliable tool to guide a society's behavior.
Note that this in sharp contrast to altruism results for a similar nonatomic model of transportation networks, where it is known that uniform altruism ensures that \emph{all} equilibria are socially-optimal~\cite{Chen2014}.
Accordingly, this study on uncoordinated altruism may serve to complement current work in socially-networked autonomy, where a simple and popular approach to societal optimization is to embed autonomous altruists in societal systems~\cite{Byk2018,Li2021,Brown2020a}.
To our knowledge, this paper demonstrates the first-known application for which the price of anarchy of altruism is unbounded --- that is, it represents a cautionary tale about the potential significant negative effects of partial altruism.
More generally, these results illustrate the fact that local alignment of incentive is not sufficient to ensure the presence of high-quality equilibria, even in nonatomic game settings.

\section{Model}

To study the effects of uncoordinated social responses to a pandemic, our model couples a standard density-based SIR dynamical epidemiological model with a nonatomic congestion game.

\subsection{Location-Based Epidemic Model}

We consider a simple extension of standard SIR epidemiological models which allows individuals to select their location from a set of countably-many locations, which for convenience we simply represent as the set of natural numbers $\enn$.
For location $i\in\enn$, we write $x_i\geq0$ to denote the fraction of the total user population that is situated at location $i$.
{Without loss of generality, we index the locations such that $x_i\leq x_{i+1}$ for all $i\in\enn$.}
In our model, we assume that each location represents a fixed physical area and that all locations are the same size; thus, $x_i$ represents both the \emph{number} of individuals at that location and the \emph{density} of the population at location $i$.
We write a \emph{social allocation} as $x:=(x_i)_{i\in\enn}$.

In using $x_i$ to represent a density, we adopt the \emph{density-based} compartmental epidemiological model of~\cite{Hu2013}.
Coarsely speaking, this model assumes that the population at a location is uniformly distributed across the area; thus, contact rates scale linearly with density.
At location $i$, $S_i(t)$, $I_i(t)$, and $R_i(t)$ denote the number of susceptible, infected, and recovered individuals as a function of time, which are defined as solutions to the following nonlinear ordinary differential equations:
\begin{align}
\Sidot(t) 	&= -\beta I_i(t)\Si(t) \label{eq:Sdot}\\
\Iidot(t)	&= \beta \Ii(t)\Si(t) - \gamma \Ii(t)  \label{eq:Idot}\\
\Ridot(t) 	&= \gamma \Ii(t). \label{eq:Rdot}
\end{align}
In keeping with standard epidemiological notation, we write $R_0:=\beta/\gamma$ to denote the \emph{basic reproduction number}.

For each location $i$, we assume that a fraction $\eta\in(0,1)$ of the population is initially infected and that the remainder of the population is initially susceptible.
That is, we assume initial conditions of
\begin{align}
\Si(0) 	&= (1-\eta)x_i  \label{eq:S0}\\
\Ii(0)	&= \eta x_i  \label{eq:I0}\\
\Ri(0) 	&= 0. \label{eq:R0}
\end{align}

It is well-known that the solution to the initial value problem defined by~\eqref{eq:Sdot}-\eqref{eq:R0} is unique for any $\eta$, and that for any initial conditions, the infection eventually dies out: $\lim_{t\to\infty}\Ii(t)=0$. 
We write $R_i^\infty(x_i):=\lim_{t\to\infty}R_i(t)$ to represent the number of individuals at location $i$ who eventually contract the virus at some point during the epidemic.%
\footnote{Note that our model assumes perfect mixing among the individuals at each location.} %

It can be shown by standard techniques~\cite{Pakes2014} that for problem~\eqref{eq:Sdot}-\eqref{eq:R0}, $R_i^\infty(x_i)$ satisfies the equation
\begin{equation}
\Ri^\infty(x_i) =  x_i - (1-\eta)x_i \exp({-R_0\Ri^\infty(x_i)}). \label{eq:transcend}
\end{equation}

Our chief interest is to derive the probability with which an individual at location $i$ will contract the virus at some point throughout the epidemic, which we denote by 
\begin{equation}
p(x_i):=  \frac{R_i^\infty(x_i)}{x_i}.
\end{equation}



\subsection{Nonatomic Game}

To study the emergent behavior resulting from uncoordinated social responses to a global pandemic, we define a nonatomic game in which each individual in the overall population can select which location they wish to inhabit for the duration of the pandemic.
In this context, we model the large population as a continuum of infinitely-many infinitesimally-small individuals.

Each individual wishes to avoid becoming infected by the virus, and thus for location $i$ with density $x_i$, an individual views the probability of becoming eventually sickened $p(x_i)$ as a cost.
However, to model the cost of isolation, we assume that it is independently desirable to be located at a populous location, so each individual experiences an additional \emph{isolation cost} 
\begin{equation}
f(x_i) := Cx_i^{-1}
\end{equation}
for some constant $C>0$. 

As we depict in Figure~\ref{fig:plots}, we model the cost experienced by a selfish individual at location $i$ as
\begin{equation} \label{eq:selfcost}
J^{\rm s}_i(x_i) = f(x_i) + p(x_i).
\end{equation}

We model an \emph{allocation} (i.e., an assignment of individuals to locations) with a Lebesgue-measurable function $X:[0,1]\to\enn$ which maps the set of individuals (the unit interval) to the set of locations ($\enn$).
Given an allocation $X$, the population density at location $i$ is given by the measure of individuals selecting $i$.
Formally, the density at $i$ is $x_i:=\mu(X^{-1}(i))$, where $\mu$ is the Lebesgue measure.
We denote the set of \emph{used locations} by $N(x):=\{i\in\enn\ |\ X(a)=i, a\in[0,1]\}$; note that by this definition a location $i\in N(x)$ need not have $x_i>0$ since it may be used by a measure-$0$ set of individuals.
We often write $x$ (a tuple of densities) to denote an allocation; it is to be understood that these densities are consistent with a measurable function $X$ as described above.
The overall social cost $J(x)$ of a particular allocation $x$ is given by the population-weighted sum of all location costs:
\begin{align}
J(x) 	&:= \sum_{i\in N(x)} x_iJ^{\rm s}_i(x_i) \nonumber \\
	&= C|N(x)| + \sum_{i\in N(x)} R_i^\infty(x_i). \label{eq:Jdef}
\end{align}

From this we may derive the subjective cost experienced by an altruistic individual.
Chen and Kempe's work on altruism~\cite{Chen2014} models an altruistic individual as experiencing a personal cost that is equal to the gradient of the social cost so that an altruistic individual's profitable deviation to another strategy can only occur if this deviation reduces the social cost.
However, $J(x)$ is discontinuous and thus the altruism model requires a suitable modification.
To this end, for allocation $x$, we call location $i$ \emph{empty} if $i\notin N(x)$; that is, $\forall a\in[0,1]$, $X(a)\neq i$.
The discontinuities in $J(x)$ occur when the first nonatomic agent deviates from a used location to an empty location (alternatively, when an agent deviates \emph{from} a location and it becomes empty).

Accordingly, we represent the altruistic cost of an empty location as $C + \partial R^\infty(0)/\partial x_i$.
That is, the \emph{first} individual who deviates to an empty location experiences a switching cost of $C$; this models the fact that the set $N(x)$ has increased in size.
In essence, the deviating individual is ``charged'' for the increase in cost due to the expansion of $N(x)$.
Thus, the subjective cost experienced by an altruistic individual is
\begin{equation} \label{eq:altcost}
J^{\rm a}_i(x_i) = \left\{
\begin{array}{ll}
C + \frac{\partial R_i^\infty (0)}{\partial x_i} 	&\mbox{ if }i\notin N(x), \\
\frac{\partial R_i^\infty (x_i)}{\partial x_i} &	\mbox{ if }i\in N(x).
\end{array}
\right.
\end{equation}

To begin to understand the limitations of uncoordinated social responses to a pandemic, we investigate the evolutionarily-stable states associated with the above game in which individuals are either all selfish~\eqref{eq:selfcost} or altruistic~\eqref{eq:altcost}.
We say that $\xess$ is an \emph{evolutionarily-stable state} (ESS) if no individual can profitably decrease her cost by changing her location \emph{and} if the gradient of every inhabited location is negative.
This second condition is a notion of {stability}, as it guarantees that at an ESS, defections by a sufficiently small group of individuals are self-defeating.

\begin{definition}
An allocation $\xess$ is an ESS for individual type $z\in\{{\rm a},{\rm s}\}$ if and only if it satisfies both of the following conditions:
\begin{enumerate}
\item If $\xess_i>0$, $\xess_j>0$, and $k\notin N(\xess)$ for $i,j,k\in\enn$, then the costs of locations $i$ and $j$ are equal and no greater than that of location $k$:
\begin{equation} \label{eq:wardrop}
J^z_i\left(\xess_i\right) = J^z_i\left(\xess_j\right) \leq J^z_k\left(0\right).
\end{equation}
\item If $\xess_i>0$ and $|N(x)|>1$, then
\begin{equation}\label{eq:ess}
\frac{\partial J_i^z\left(\xess_i\right)}{\partial x_i} > 0.
\end{equation}
\end{enumerate}
\end{definition}

Condition (1) above is simply a definition of a Nash equilibrium; however, our model admits many Nash equilibria which fail condition (2) which we do not regard as reasonable since they are unstable under a wide range of population dynamics.
For instance, when there is no disease (i.e., $\eta=0$), for any $C$ it is a selfish Nash equilibrium to have $x_i=0.1$ for $i\in\{1,2,\dots,10\}$.
At such an equilibrium any increase in the population at any location \emph{reduces} the cost of that location, making this a self-propagating deviation.
While our model considers only static equilibria, we wish to ensure that if coupled with suitable population dynamics, our model's equilibria would at least be locally stable.
Figure~\ref{fig:plots} illustrates the behavior of ESS for several parameter values for our model.

With all of the above, we specify an instance of a \emph{pandemic location game} by $G=(R_0,\eta,C)$ for basic reproduction number $R_0>0$, initial infected fraction $\eta\in(0,1]$, and isolation cost $C>0$; given this game instance, we write its altruistic version as $G^{\rm a}$.
We write the set of ESS for game $G$ as $\ess(G)$; we denote the optimal social cost by $J^*(G):=\inf_x J(x)$.

\begin{figure}
\begin{center}
\vspace{2mm}
\includegraphics[width=0.52\textwidth]{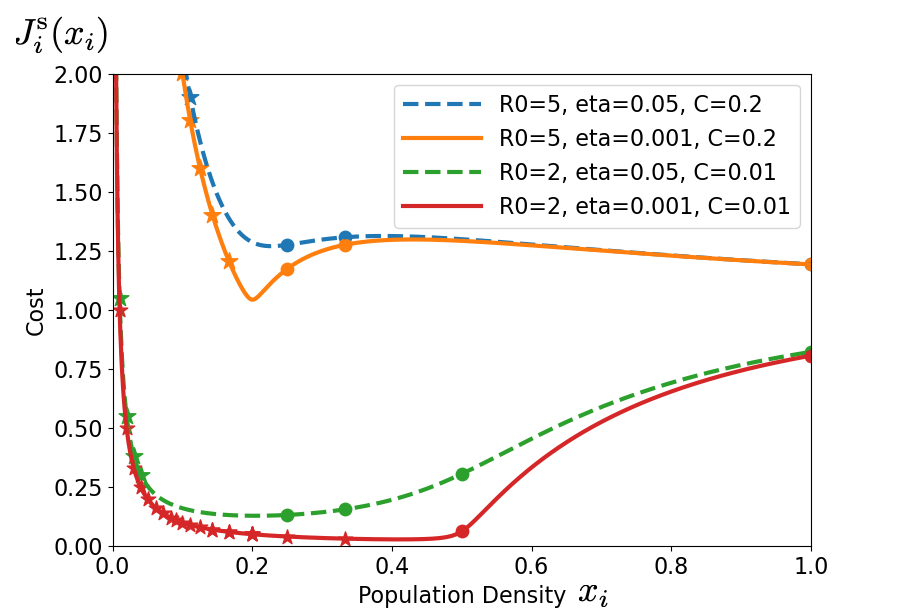}
\end{center}
\vspace{-2mm}
\caption{Plot demonstrating the cost $\Js_i(x_i)$ of a location with respect to population density $x_i$ for various parameter values.
Markers indicate altruistic ($\bigstar$) and selfish ($\bullet$) equilibrium densities.
Note that when initial infected fraction $\eta$ is relatively high (dashed lines), even the best altruistic equilibrium densities can have considerably higher cost than the best selfish equilibrium densities. 
Source code to generate figure is available at
\url{https://github.com/descon-uccs/cdc-2021-altruism-pandemic}
\label{fig:plots}}
\end{figure}
To characterize the relative quality of worst-case ESS , we employ the well-studied price of anarchy~\cite{Papadimitriou2001}:
\begin{equation}
\poa(G) :=  \sup\limits_{x\in\ess(G)}\frac{J(x)}{J^*(G)}
\end{equation}



\section{Our Contributions} \label{sec:ourContributions}

In this paper, we compare the effects of uniform selfishness and uniform altruism in the context of a simple model of a global pandemic.
Conceptually, this may give insight into the extent to which a society may be expected to react to a pandemic in the absence of centralized coordination.
In summary, the key results are that the worst-case cost of all-selfish equilibria is bounded with respect to optimal, but there is no upper bound on the cost of altruistic equilibria.
Concisely, for any pandemic location game $G=(R_0,\eta,C)$, Theorem~\ref{thm:selfishpoa} regarding selfish populations states that
\begin{equation*}
\poa(G)\leq \frac{3}{C} + R_0,
\end{equation*}
but Theorem~\ref{thm:altpoa} regarding altruistic populations states that
\begin{equation*}
\poa(G^{\rm a})  = \infty.
\end{equation*}

We present the full results in a tutorial style; beginning with a useful lemma.


\subsection{A Foundational Lemma}

First we present Lemma~\ref{lem:altruistic} which establishes some properties of $R_i^\infty(x_i)$ which satisfies~\eqref{eq:transcend}.

\begin{lemma} \label{lem:altruistic}
The following are true regarding the partial derivatives of $R_i^\infty(x_i)$ at $x_i=0$ for any problem instance:
\begin{equation}
    \frac{\partial R_i^\infty(0)}{\partial x_i} = \eta,
\end{equation}
and 
\begin{equation}
    \frac{\partial^2 R_i^\infty(0)}{\partial x_i^2} = 2R_0\eta(1-\eta).
\end{equation}
Furthermore, $\frac{\partial^2 R_i^\infty(x_i)}{\partial x_i^2}>0$ on interval $(0,a)$ with $a>0$.
Finally, if $x_i>1/R_0$, then $\frac{\partial R_i^\infty(x_i)}{\partial x_i}>1$ and  if $x_i\leq1/R_0$, then $\frac{\partial R_i^\infty(x_i)}{\partial x_i}\leq1$.
\end{lemma}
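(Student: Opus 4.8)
The plan is to derive all four claims from the final-size relation~\eqref{eq:transcend} by implicit differentiation, with one structural fact about the epidemic fixed point doing the real work at the end. Abbreviating $R := R_i^\infty(x_i)$, define
\[
F(x_i,R) := R - x_i + (1-\eta)x_i e^{-R_0 R},
\]
so that~\eqref{eq:transcend} is exactly $F(x_i,R)=0$. Substituting $x_i=0$ forces $R=0$, and since $\partial F/\partial R = 1 - R_0(1-\eta)x_i e^{-R_0 R}$ equals $1$ at this point, the implicit function theorem makes $R$ a smooth function of $x_i$ near $0$ with
\[
\frac{\partial R_i^\infty(x_i)}{\partial x_i} = \frac{1-(1-\eta)e^{-R_0 R}}{1 - R_0(1-\eta)x_i e^{-R_0 R}}.
\]
Evaluating at $x_i=0$ (hence $R=0$) collapses this to $\eta$, giving the first claim.

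Next I would differentiate this quotient once more and evaluate at $x_i=0$, substituting $R=0$ and $\partial R_i^\infty(0)/\partial x_i=\eta$ from the previous step. This is a routine quotient-rule computation in which both factors simplify dramatically, and it returns $2R_0\eta(1-\eta)$, the second claim. For the third claim I would invoke smoothness of $R$ near $0$ (again from the implicit function theorem): the second derivative is then continuous at $0$, and since $2R_0\eta(1-\eta)>0$ for $\eta\in(0,1)$, it remains positive on some interval $(0,a)$ with $a>0$.

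The last claim is the one requiring an idea. Subtracting $1$ from the derivative formula and using $F=0$ to replace $(1-\eta)x_i e^{-R_0 R}$ by $x_i-R$, I would obtain the identity
\[
\frac{\partial R_i^\infty(x_i)}{\partial x_i} - 1 = \frac{(1-\eta)e^{-R_0 R}\,(R_0 x_i - 1)}{1 - R_0(x_i - R)}.
\]
The numerator factor $(1-\eta)e^{-R_0 R}$ is strictly positive, so the sign of the left-hand side agrees with that of $(R_0 x_i-1)/V$, where $V:=1-R_0(x_i-R)$ is the denominator. Thus, once $V>0$ is established, $\partial R_i^\infty/\partial x_i-1$ has the same sign as $R_0 x_i - 1$, which is positive precisely when $x_i>1/R_0$ and nonpositive when $x_i\le 1/R_0$, completing the proof.

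The main obstacle is therefore showing $V>0$, equivalently that the final susceptible mass $u:=x_i-R$ satisfies $u<1/R_0$. I would establish this by reading $u$ as the epidemic fixed point: from~\eqref{eq:transcend}, $u$ is a root of the strictly concave map $h(u):=u-(1-\eta)x_i e^{-R_0(x_i-u)}$ on $[0,(1-\eta)x_i]$, with $h(0)<0$ and $h\big((1-\eta)x_i\big)>0$. Strict concavity together with these boundary signs forces the relevant root to be an upward crossing, so $h'(u)>0$ there; and since $h'(u)=1-R_0 u = V$ at the root, this yields $V>0$ as required. The delicate point to get right is that a second (downward) crossing cannot occur in $[0,(1-\eta)x_i]$, which follows because $h$ is positive at the right endpoint and concave, so once it is increasing through zero it cannot return below zero on the interval.
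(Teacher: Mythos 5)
Your proposal is correct, and for the first three claims it follows essentially the paper's own route: implicit differentiation of \eqref{eq:transcend}, evaluation at $x_i=0$ using $R(0)=0$ to get $R'(0)=\eta$, a second differentiation (quotient rule in your formulation) giving $R''(0)=2R_0\eta(1-\eta)$, and continuity of $R''$ to extend positivity to an interval $(0,a)$. The genuine difference is in the last claim, and there your argument is stronger than the paper's. The paper rearranges the implicit derivative into $R'\bigl(1-xR_0(1-\eta)e^{-R_0R}\bigr)=1-(1-\eta)e^{-R_0R}$ and simply asserts that this ``illustrates'' the three sign cases. But that conclusion is immediate only when the factor $V:=1-xR_0(1-\eta)e^{-R_0R}=1-R_0(x-R)$ is positive: this is automatic for $x\le 1/R_0$ (since $0\le x-R<x\le 1/R_0$), but not for $x>1/R_0$, where a negative $V$ would instead yield $R'<0<1$ from the very same identity. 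Your concavity argument---reading the final susceptible mass $u=x-R$ as the unique root of the strictly concave map $h(u)=u-(1-\eta)x e^{-R_0(x-u)}$ with $h(0)<0$ and $h\bigl((1-\eta)x\bigr)>0$, so that the crossing is upward and $h'(u)=1-R_0u=V>0$---supplies exactly the missing fact, which is the classical statement that the final susceptible population lies strictly below the threshold $1/R_0$. As a side benefit, $V>0$ also validates the implicit function theorem at every $x_i>0$ (not just near $0$), justifying the differentiability of $R_i^\infty$ that both your computation and the paper's otherwise take for granted. So: same skeleton, but your treatment of the final claim closes a real gap that the paper's proof glosses over.
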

\vspace{2mm}

\begin{proof}
Consider~\eqref{eq:transcend}:
\begin{equation*}
\Ri^\infty(x) =  x - (1-\eta)x \exp({-R_0\Ri^\infty(x)}).
\end{equation*}
For compactness, we write $R$, $R'$, and $R''$ to denote $\Ri^{\infty}$ and its first and second partial derivatives with respect to $x$.
By implicit differentiation, we obtain that \begin{equation}
    R' = 1-(1-\eta) \exp(-R_0R)\left[1 -xR_0R'  \right]. \label{eq:rderiv}
\end{equation}
Note that it must be true that $\lim_{x\to0}R_i^\infty(x)=0$, since by definition $R(x)\in[0,x]$ for all $x$; thus, $R$ is right-continuous at $0$ and we may write $R(0)=0$.
It then follows from~\eqref{eq:rderiv} that $R'(0)=\eta$.
Furthermore, note that the above can be rearranged to obtain
\begin{equation}
    R'(1-xR_0(1-\eta)\exp(-R_0R)) = 1-(1-\eta)\exp(-R_0R), \label{eq:Rprime}
\end{equation}
which illustrates that if $x=1/R_0$, then $R'(x)=1$, if $x>1/R_0$ then $R'(x)>1$, and if $x<1/R_0$, then $R'(x)<1$.

Differentiating~\eqref{eq:rderiv} with respect to $x_i$ once again, we obtain that
\begin{equation}
    R'' \hs=\hs -(1-\eta)\exp(-R_0R)\hs\hs\left[ -2R_0R' \hs-\hs
    x\hs\left[R_0R'' \hs-\hs (R_0R')^2
    \right]\right]\hs. \label{eq:rderiv2}
\end{equation}
Once again we may let $R(0)=0$ and $R'(0)=\eta$ to compute that $R''(0)=2R_0\eta(1-\eta)>0$.
Since all derivatives of the above are continuous, it must hold that $R''(x)$ is positive on an interval $(0,a)$ for which $a>0$.
\end{proof}

\subsection{Characterizing Equilibria}

To begin, Propositions~\ref{prop:selfeq} and~\ref{prop:alteq} provide an analytical characterization of the ESS associated with each type of agent.

\begin{proposition}\label{prop:selfeq}
    Let $G$ be a pandemic location game. 
    At any ESS $\xess$ of $G$, it holds that 
    \begin{equation}
    \xess_i=\xess_j \mbox{ for all }i,j\in N(\xess).\label{eq:eqx}
    \end{equation}
    Furthermore,
    \begin{enumerate}
        \item There is a finite number $M_G$ such that if $\xess$ is a selfish ESS, $|N(\xess)|\leq M_G$.
        \item The maximal-density state $x_1=1$ is a selfish ESS.
    \end{enumerate}
\end{proposition}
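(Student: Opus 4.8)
The plan is to treat the density-equality claim~\eqref{eq:eqx} as the crux and to derive the two enumerated claims from it. Following the proof of Lemma~\ref{lem:altruistic}, write $R,R',R''$ for $\Rinfi$ and its first two derivatives in $x_i$, so that $\Js_i(x_i)=(C+R)/x_i$ and hence $\partial\Js_i/\partial x_i=(x_iR'-R-C)/x_i^2$. Thus the stability condition~\eqref{eq:ess} is exactly $h(x_i)>C$, where I define $h(x):=xR'(x)-R(x)$; note $h(0)=0$. Since $f(0^+)=+\infty$, any location in $N(\xess)$ carrying zero mass has infinite selfish cost and an immediately profitable deviation, so at an ESS every used location has strictly positive density and I restrict attention to these.

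For~\eqref{eq:eqx}, take an ESS with $|N(\xess)|>1$ (the single-location case is immediate). Condition~\eqref{eq:wardrop} forces all used locations to share one selfish cost, while~\eqref{eq:ess} places every used density in the superlevel set $\mathcal{H}:=\{x>0:h(x)>C\}$. On $\mathcal{H}$ the derivative of $\Js_i$ equals $(h-C)/x^2>0$, so $\Js_i$ is strictly increasing and therefore injective there \emph{provided $\mathcal{H}$ is a single interval}; granting this, equal costs force equal densities, and the whole claim reduces to a structural fact about $h$.

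Establishing that fact is the step I expect to be hardest. Differentiating gives the clean identity $h'(x)=xR''(x)$, so $\mathcal{H}$ is an interval as soon as $R''$ changes sign only once (from $+$ to $-$), i.e.\ $h$ is unimodal. Rearranging~\eqref{eq:rderiv2} exactly as in Lemma~\ref{lem:altruistic} yields $R''(1-xR_0w)=wR_0R'(2-xR_0R')$ with $w:=(1-\eta)\exp(-R_0R)>0$ and $1-xR_0w>0$, so $R''$ carries the sign of $2-xR_0R'$. Using $R=x(1-w)$ together with~\eqref{eq:Rprime}, the region $\{xR_0R'<2\}$ coincides with $\{R_0(2x-R)<2\}$, and $x\mapsto R_0(2x-R)$ has derivative $R_0(2-R')$; hence this region is an initial interval, and $R''$ changes sign exactly once, precisely when the uniform bound $R'(x)<2$ holds for all $x>0$. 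Proving $R'<2$ is the real work: from $R'=(1-w)/(1-xR_0w)$ it is equivalent to $w(2xR_0-1)<1$, an inequality on the root $w$ of $w=(1-\eta)\exp(-R_0x(1-w))$ that holds trivially for $xR_0\le\tfrac12$ and that I would settle otherwise by a short estimate on this transcendental relation. Granting $R'<2$, $h$ rises from $h(0)=0$ to a single maximum and then decreases, $\mathcal{H}$ is an interval, and~\eqref{eq:eqx} follows.

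The two enumerated claims are then quick. For the first, \eqref{eq:eqx} makes every used density equal to a common value $d$, and~\eqref{eq:ess} forces $d\in\mathcal{H}$; since $h(0)=0<C$ and $h$ is continuous, $x_{\rm L}:=\inf\mathcal{H}>0$, so $d>x_{\rm L}$. Because the densities sum to the total mass, $\sum_i x_i=1$, this gives $|N(\xess)|=1/d<1/x_{\rm L}=:M_G<\infty$ (and $M_G=1$ when $\mathcal{H}=\emptyset$, i.e.\ $C\ge\max h$). For the second, at $x_1=1$ there is a single used location, so~\eqref{eq:ess} is vacuous; the only deviations available are to empty locations $k$, each costing $\Js_k(0)=f(0)+p(0)=+\infty\ge\Js_1(1)$, so~\eqref{eq:wardrop} holds and $x_1=1$ is a selfish ESS.
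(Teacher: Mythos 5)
Your handling of the two enumerated items essentially coincides with the paper's proof: the paper also extracts a positive lower bound on any stable density (there, from the observation that $\partial \Js_i(x_i)/\partial x_i<0$ on some interval $(0,\bar{x})$, which is your $x_{\rm L}=\inf\mathcal{H}>0$) and concludes $|N(\xess)|\leq 1/\bar{x}$; and for item 2 it gives exactly your argument that the only deviation from $x_1=1$ is to an empty location of infinite cost. Where you genuinely depart is \eqref{eq:eqx}: the paper dispatches it in one sentence --- it ``holds due to \eqref{eq:wardrop} and because all cost functions are identical'' --- which tacitly assumes $\Js_i$ is injective on the relevant set of densities. Since $\Js_i$ is U-shaped, that assumption is not free: it requires the stability condition \eqref{eq:ess} together with connectedness of the region $\mathcal{H}=\{x>0:h(x)>C\}$, which is precisely the structure you isolate. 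So you correctly identified a subtlety that the paper's own proof glosses over, and your $h(x)=xR'(x)-R(x)$ reformulation of \eqref{eq:ess}, the identity $h'=xR''$, and the reduction of the sign of $R''$ to the sign of $2-xR_0R'$ are all correct.

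The genuine gap is the one you flag yourself: the uniform bound $R'(x)<2$, on which unimodality of $h$, the interval structure of $\mathcal{H}$, and hence \eqref{eq:eqx} all rest, is never proved for $xR_0>\tfrac12$; ``I would settle [it] by a short estimate'' is not a proof, and this is the crux of your whole argument. The estimate does go through, so the gap is closable: with $y:=xR_0$ and $w:=(1-\eta)\exp(-R_0R)$, suppose for contradiction that $w(2y-1)\geq1$. Then $y\geq(1+w)/(2w)$; on the other hand, $R=x(1-w)$ gives $w=(1-\eta)\exp(-y(1-w))<\exp(-y(1-w))$, hence $y<-\ln w/(1-w)$. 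Combining the two bounds and multiplying by $(1-w)>0$ yields $(1-w^2)/(2w)+\ln w<0$, contradicting the fact that $g(w):=(1-w^2)/(2w)+\ln w$ satisfies $g'(w)=-(1-w)^2/(2w^2)\leq0$ and $g(1)=0$, so $g>0$ on $(0,1)$. You also tacitly use $R'>0$ and $1-xR_0w>0$ (needed both for your sign argument and to pass between $\{xR_0R'<2\}$ and $\{R_0(2x-R)<2\}$); these deserve a line, e.g.\ via convexity in $R$ of $\phi(R)=R-x+(1-\eta)x\exp(-R_0R)$ with $\phi(0)=-\eta x<0$, which forces $\phi'>0$ at the final-size root. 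With those insertions your proof is complete, and is in fact a more rigorous argument for \eqref{eq:eqx} than the one the paper gives; as submitted, however, it is incomplete at its central step.
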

For intuition, point 1 above says that selfish individuals never err too far on the side of isolation; point 2 says that in fact a selfish population may err aggressively on the side of density (and thus sickness).

\begin{proof}
Equation~\eqref{eq:eqx} holds due to~\eqref{eq:wardrop} and because all cost functions are identical.
To prove item~1 note that $\partial J_i^{\rm s}(x_i)/\partial x_i<0$ on an interval $(0,\bar{x})$ for some $\bar{x}>0$.
Thus, as a consequence of~\eqref{eq:ess} and~\eqref{eq:eqx}, no selfish ESS can have $\xess_i<\bar{x}$ for any $i\in N(\xess)$.
Equivalently, $|N(\xess)|<1/\bar{x}$.
To prove item~2, note that due to the form of $f(x)$, the only possible deviation from the maximal-density state is a deviation to an uninhabited location with an infinite cost.
\end{proof}

Next, we provide an analytical characterization of the altruistic ESS.

\begin{proposition} \label{prop:alteq}
Let $G$ be a pandemic location game.
The following are true regarding the altruistic ESS of $G^{\rm a}$:
\begin{enumerate}
\item At any altruistic ESS $\xess$, if $i\in N(\xess)$, then $J_i^{\rm a}(\xess_i) \leq C + \eta.$ \label{point:alt1}
\item There is no upper bound on the number of locations used in an altruistic ESS.\label{point:alt2}
\item If $C+\eta\leq1$, then the maximal-density allocation $x_1=1$ is never an altruistic ESS. \label{point:alt3}
\end{enumerate}
\end{proposition}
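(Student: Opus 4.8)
The unifying observation I would build on is that every empty location carries altruistic cost \emph{exactly} $C+\eta$: by the definition in~\eqref{eq:altcost} together with Lemma~\ref{lem:altruistic}, any $k\notin N(\xess)$ has $J_k^{\rm a}(0)=C+\partial R_k^\infty(0)/\partial x_i=C+\eta$. I would anchor all three parts on this value and on the convexity/monotonicity facts of Lemma~\ref{lem:altruistic}. For Point~\ref{point:alt1}, the plan is to read off the inequality from the Wardrop condition~\eqref{eq:wardrop}. If $i\in N(\xess)$ has $\xess_i=0$, the ``used'' branch gives $J_i^{\rm a}(0)=\eta\le C+\eta$ immediately. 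If $\xess_i>0$, I would first establish that an empty location must exist: since the densities are nonnegative and sum to one while $\enn$ is infinite, only finitely many locations can carry positive density (the equality clause of~\eqref{eq:wardrop} forces all positive-density locations to share a common cost, and since the marginal cost $\partial R^\infty/\partial x_i$ is increasing by Lemma~\ref{lem:altruistic} this means a common positive density, which cannot be attained by infinitely many locations summing to one). Hence some $k\notin N(\xess)$ exists, and~\eqref{eq:wardrop} yields $J_i^{\rm a}(\xess_i)\le J_k^{\rm a}(0)=C+\eta$.

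For Point~\ref{point:alt2} I would exhibit explicit equilibria rather than argue abstractly. Consider the uniform allocation $\xess_i=1/n$ for $i\in\{1,\dots,n\}$ and $\xess_i=0$ otherwise. By symmetry all used locations have identical cost $\partial R^\infty(1/n)/\partial x_i$, so the equality in~\eqref{eq:wardrop} is automatic; the inequality requires $\partial R^\infty(1/n)/\partial x_i\le C+\eta$, which holds for all sufficiently large $n$ because this marginal cost converges to $\partial R^\infty(0)/\partial x_i=\eta<C+\eta$ as $n\to\infty$ by Lemma~\ref{lem:altruistic}. The stability condition~\eqref{eq:ess} requires $\partial^2 R^\infty(1/n)/\partial x_i^2>0$, which likewise holds once $1/n$ lies in the interval $(0,a)$ of Lemma~\ref{lem:altruistic}. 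Thus every large enough $n$ yields a valid altruistic ESS, so the number of used locations admits no upper bound.

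For Point~\ref{point:alt3} I would show the maximal-density state violates the Nash condition~\eqref{eq:wardrop}. Its only used location has cost $J_1^{\rm a}(1)=\partial R_1^\infty(1)/\partial x_1$, whereas every other location is empty with cost $C+\eta\le1$. Since $1>1/R_0$ in the epidemic regime, the final clause of Lemma~\ref{lem:altruistic} gives $\partial R_1^\infty(1)/\partial x_1>1$, whence $J_1^{\rm a}(1)>1\ge C+\eta=J_k^{\rm a}(0)$ for any empty $k$. Thus a deviation to any empty location strictly lowers the deviator's cost, the inequality in~\eqref{eq:wardrop} fails, and $x_1=1$ cannot be an altruistic ESS.

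The main obstacle is the infinite-location bookkeeping in Point~\ref{point:alt1}: one must rule out the degenerate possibility that $N(\xess)$ exhausts $\enn$ (including measure-zero ``used'' locations) so that no empty comparison location is available, and must be careful that the finiteness-of-positive-density argument uses monotonicity of $\partial R^\infty/\partial x_i$ only on the range where it is guaranteed. Once the empty-location value $C+\eta$ is pinned down, Points~\ref{point:alt2} and~\ref{point:alt3} reduce to direct verifications against the four conclusions of Lemma~\ref{lem:altruistic}.
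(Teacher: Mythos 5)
Your proposal is correct and follows essentially the same route as the paper's proof: all three points are anchored on the empty-location altruistic cost $C+\eta$ obtained from~\eqref{eq:altcost} and Lemma~\ref{lem:altruistic}, Point~\ref{point:alt2} is established via uniform allocations whose common density lies in the interval $(0,a)$ where $\partial^2 R_i^\infty/\partial x_i^2>0$, and Point~\ref{point:alt3} via $J_1^{\rm a}(1)>1\geq C+\eta$. If anything you are slightly more careful than the paper, which never discusses the existence of an empty comparison location (the corner case you flag, where measure-zero ``used'' locations exhaust $\enn$, is glossed over there too), does not verify the Nash inequality $\partial R_i^\infty(1/n)/\partial x_i\leq C+\eta$ in its Point~\ref{point:alt2} argument, and silently uses $1>1/R_0$ (i.e., $R_0>1$) in Point~\ref{point:alt3}, an assumption you at least make explicit by invoking the epidemic regime.
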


Before presenting the proof, we wish to discuss the ramifications of these points.
Point~\ref{point:alt2} means that altruistic individuals are in some sense ``content'' with high degrees of isolation, even if the isolation is far more aggressive than is required to combat the pandemic.
Point~\ref{point:alt3} shows that in general, altruistic individuals are highly averse to density.
Together, these points imply that altruistic individuals never err too far on the side of density, and that they may in fact err aggressively on the side of isolation.

\begin{proof}
Let $\xess$ be an altruistic ESS.
Due to~\eqref{eq:altcost}, no individual can profitably deviate to an un-used location; that is, for every location $i\in N(\xess)$, it must be true that 
\begin{equation}
    J_i^{\rm a}(\xess_i) \leq C + \frac{\partial R_i^\infty (0)}{\partial x_i}.\nonumber
\end{equation}
Lemma~\ref{lem:altruistic} provides that ${\partial R_i^\infty (0)}/{\partial x_i}=\eta$; thus, if $i\in N(\xess)$, then the fundamental incentive constraint for an altruistic popluation is
\begin{equation}\label{eq:alt incentive constraint}
    J_i^{\rm a}(\xess_i) \leq C + \eta,
\end{equation}
since that is the altruistic cost for an individual to deviate to an empty location.
Thus the proof of Point~\ref{point:alt1} is obtained.

To see Point~\ref{point:alt2}, note that Lemma~\ref{lem:altruistic} provides that $\partial J^{\rm a}(x)/\partial x>0$ on some interval $(0,a)$ with $a>0$.
Thus, if $x$ is such that $x_i=x_j$ for all $i,j\in N(x)$ and $1/|N(x)|<a$, then $x$ is an altruistic ESS.
Note that $x$ may be such that $|N(x)|$ is arbitrarily large and still satisfy this sufficient condition.

Point~\ref{point:alt3} is proved by another application of Lemma~\ref{lem:altruistic}.
This time we appeal to the fact that if $x_i>1/R_0$, then we have $J^{\rm a}_i(x_i)>1$; in particular we have that $J^{\rm a}_i(1)>1$.
If $C+\eta\leq1$, then~\eqref{eq:alt incentive constraint} ensures that $x_1=1$ cannot be an ESS for an altruistic population.
%
\end{proof}

To derive the price anarchy for selfish populations, we prove a pair of lemmas which ultimately provide an upper bound on the cost of any selfish ESS.
First, Lemma~\ref{lem:pprime} shows that for low population densities, the probability of becoming sickened cannot grow too fast as a function of density.

\begin{lemma}\label{lem:pprime}
For every pandemic location game $G=\{R_0,\eta,C\}$, it holds for all $x_i\leq 1/R_0$ that
\begin{equation}
\frac{\partial}{\partial x_i} p_i(x_i) \leq R_0.
\end{equation}
\vspace{2mm}
\end{lemma}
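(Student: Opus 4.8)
The plan is to avoid differentiating the quotient $R_i^\infty(x_i)/x_i$ directly, and instead to exploit the transcendental relation~\eqref{eq:transcend} to rewrite $p$ in a form whose derivative factors cleanly. Writing $R$ for $\Ri^\infty(x_i)$ and dividing~\eqref{eq:transcend} through by $x_i$ yields the compact expression
\begin{equation*}
p(x_i) = \frac{R}{x_i} = 1 - (1-\eta)\exp(-R_0 R),
\end{equation*}
which is far more convenient to work with than $R/x_i$.

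Next I would differentiate this identity with respect to $x_i$ via the chain rule, writing $R'=\partial R/\partial x_i$, to obtain
\begin{equation*}
\frac{\partial}{\partial x_i} p(x_i) = R_0\,(1-\eta)\exp(-R_0 R)\,R'.
\end{equation*}
The factor $R_0$ pulls out exactly, so since $R_0>0$ the claim $\partial p/\partial x_i \leq R_0$ reduces to showing that the product $(1-\eta)\exp(-R_0 R)\,R'$ is at most $1$ whenever $x_i \leq 1/R_0$.

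I would then bound the two remaining factors. For the exponential factor, the same identity above gives $(1-\eta)\exp(-R_0 R) = 1 - p$, and because $R\in[0,x_i]$ forces $p = R/x_i \in [0,1]$, this factor lies in $[0,1]$; in particular it is nonnegative and bounded by $1$. For $R'$, Lemma~\ref{lem:altruistic} states directly that $R' \leq 1$ on the regime $x_i \leq 1/R_0$. Combining these, the product of a factor in $[0,1]$ with $R'\leq 1$ is itself at most $1$ (if $R'\geq0$ the product is $\leq R' \leq 1$, and if $R'<0$ the product is negative, hence $\leq 1$ trivially), which completes the argument.

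I do not anticipate a genuine obstacle here; the only real insight is recognizing the simplification $p = 1 - (1-\eta)\exp(-R_0 R)$, after which factoring out $R_0$ and invoking the $R'\leq 1$ bound from Lemma~\ref{lem:altruistic} finishes the proof. The single point worth stating carefully is that the argument does \emph{not} require the sign of $R'$: the bound $R'\leq 1$ together with the boundedness of $1-p$ in $[0,1]$ suffices, since a negative $R'$ only makes $\partial p/\partial x_i$ smaller. If one nonetheless wishes to note that $R'\geq0$, it follows either from monotonicity of $\Ri^\infty$ in the density or from~\eqref{eq:Rprime}, whose right-hand side equals $p\geq0$ and whose coefficient of $R'$ is nonnegative for $x_i\leq1/R_0$.
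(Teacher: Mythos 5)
Your proof is correct and takes essentially the same route as the paper's: both divide~\eqref{eq:transcend} by $x_i$ to obtain $p = 1-(1-\eta)\exp(-R_0 R)$, differentiate to get $\partial p/\partial x_i = R_0(1-p)R'$, and invoke the bound $R'\leq 1$ for $x_i\leq 1/R_0$ from Lemma~\ref{lem:altruistic}. Your explicit handling of the sign of $R'$ and the fact that $1-p\in[0,1]$ merely spells out what the paper's final inequality leaves implicit.
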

\begin{proof}
Throughout this proof, we omit location index subscripts, writing e.g. $p(x)$ to denote $p_i(x_i)$, and we write $R(x)$ to denote $\Rinf_i(x_i)$.
We write $p'(x)$ and $R'(x)$ to denote the first partial derivatives of $p$ and $R$ with respect to $x$.
First, $p(x)$ is given by
\begin{equation}
p(x) = 1-(1-\eta)\exp(-R_0 R(x)). \label{eq:pimp}
\end{equation}
Performing implicit differentiation, we have
\begin{equation} \label{eq:pprime1}
p'(x) = (1-\eta)R_0R'(x)\exp(-R_0 R(x)).
\end{equation}

Note that~\eqref{eq:pimp} implies that $1-p(x)=(1-\eta)\exp(-R_0R(x))$, so~\eqref{eq:pprime1} can be rewritten as
\begin{align}
p'(x)	&= (1-p(x))R_0R'(x). \nonumber \\
	&\leq R_0,
\end{align}
which holds for all $x\leq1/R_0$, as provided by Lemma~\ref{lem:altruistic}.
\end{proof}

Next, Lemma~\ref{lem:essUB} derives an upper bound on the cost of a selfish ESS that is independent of the initial infected fraction $\eta$.

\begin{lemma} \label{lem:essUB}
For every pandemic location game $G=\{R_0,\eta,C\}$, if $\xess$ is an evolutionarily-stable state for a selfish population, then
\begin{equation}\label{eq:essUB}
J(\xess) \leq \max\{2,CR_0+1\}. 
\end{equation}
\vspace{2mm}
\end{lemma}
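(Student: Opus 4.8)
The plan is to reduce this absolute cost bound to a bound on the number of inhabited locations, which the stability half of the ESS definition should then supply. First I would apply Proposition~\ref{prop:selfeq}: at any selfish ESS $\xess$ every used location carries the same density, so with $n:=|N(\xess)|$ and $\sum_i \xess_i=1$ we get $\xess_i=1/n$ for all $i\in N(\xess)$. Feeding this into~\eqref{eq:Jdef} yields $J(\xess)=Cn+n\,\Rinf(1/n)$. Because $p(x)=\Rinf(x)/x$ is a probability, the infection term collapses to $n\,\Rinf(1/n)=p(1/n)\le 1$, so $J(\xess)\le Cn+1$. It therefore remains only to show $Cn\le\max\{1,CR_0\}$, i.e. that the number of inhabited locations cannot grow too large.

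Second, for a multi-location ESS ($n>1$) I would read the density bound off of the stability condition~\eqref{eq:ess}. Since $\Js_i(x)=Cx^{-1}+p(x)$, the inequality $\partial\Js_i(\xess_i)/\partial x_i>0$ becomes $p'(1/n)>Cn^2$. I would then split on whether the common density $1/n$ lies below or above the epidemic threshold $1/R_0$, since that is precisely where Lemma~\ref{lem:pprime} controls the growth of $p$. If $1/n\le 1/R_0$ (equivalently $n\ge R_0$), Lemma~\ref{lem:pprime} gives $p'(1/n)\le R_0$, so $Cn^2<R_0\le n$ and hence $Cn<1$, giving $J(\xess)<2$. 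If instead $1/n>1/R_0$ (equivalently $n<R_0$), then even without invoking stability $Cn<CR_0$ directly, so $J(\xess)<CR_0+1$. Either branch yields $J(\xess)<\max\{2,CR_0+1\}$.

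Finally I would handle the single-location state $x_1=1$ separately, since~\eqref{eq:ess} imposes no constraint when $|N(\xess)|=1$; here I would bound $J(\xess)=C+\Rinf(1)\le C+1$ using only $\Rinf(1)\le 1$, which is absorbed into $\max\{2,CR_0+1\}$ at least whenever $R_0\ge 1$. I expect two points to require the most care: (i) converting the stability condition into the clean inequality $Cn^2<R_0$, which hinges on applying Lemma~\ref{lem:pprime} exactly on the subcritical side of the threshold $1/R_0$; and (ii) reconciling the stability-unconstrained single-location allocation with the stated bound, which is the degenerate subcritical ($R_0<1$) regime where I would expect the argument to be most delicate.
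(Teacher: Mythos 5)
Your treatment of the multi-location case is correct and follows essentially the same route as the paper: reduce via Proposition~\ref{prop:selfeq} to a common density $1/n$, split on whether that density lies below $1/R_0$, and on the subcritical side combine the stability condition~\eqref{eq:ess} with Lemma~\ref{lem:pprime}. (The paper phrases the split in terms of $\xess_i$ rather than $n$ and derives the intermediate bound $\sqrt{CR_0}+1$ where you derive $Cn<1$; both land inside the stated maximum, and your version is marginally cleaner.)

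The concern you flag in point (ii), however, is not a loose end in your write-up that more care would close --- it is a genuine gap, and it sits in the paper's own proof. The paper's second case ($\xess_i<1/R_0$) invokes~\eqref{eq:ess} unconditionally, but the ESS definition only imposes~\eqref{eq:ess} when $|N(\xess)|>1$, and Proposition~\ref{prop:selfeq} (item 2) guarantees that $x_1=1$ is a selfish ESS for \emph{every} parameter choice. When $R_0<1$ this single-location ESS falls into the paper's second case, no stability inequality is available, and the claimed bound actually fails: take $C=10$, $R_0=1/2$, and $\eta$ small, so that the ESS $x_1=1$ has cost $J(\xess)=C+\Rinf_1(1)\geq 10$, while $\max\{2,CR_0+1\}=6$. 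So no argument can close the case you isolated; the lemma as stated needs either an extra hypothesis (e.g.\ $R_0\geq1$ or $C\leq1$, under which your proof is complete) or a weakened right-hand side such as $\max\{2,\,CR_0+1,\,C+1\}$, which your bound $J(\xess)\leq C+\Rinf_1(1)\leq C+1$ for the $n=1$ case delivers. Your case analysis --- precisely the step the paper skips --- is what exposes this, and note that the repair also propagates into Theorem~\ref{thm:selfishpoa}, whose proof quotes~\eqref{eq:essUB}.
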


\begin{proof}
Let $G=\{R_0,\eta,C\}$ be a pandemic location game and let $\xess$ be an evolutionarily-stable state for a selfish population in $G$; we will write $n:=|N(\xess)|$ to denote the number of inhabited locations in $\xess$.
Proposition~\ref{prop:selfeq} gives that $\xess_i = \xess_j = 1/n$; applying~\eqref{eq:Jdef}, we have for any $i\in\{1,\dots,n\}$ that
\begin{align}
J(x) 	&= Cn + n R_i^\infty(x_i) \nonumber \\
	&= \frac{C}{\xess_i} + \frac{R_i^\infty(\xess_i)}{\xess_i} \nonumber \\
	&= J_i^{\rm s}(\xess_i).
\end{align}
Thus, for simplicity, for the remainder of the proof we will omit $i$ subscripts from functions and write e.g. $\Js(\cdot):=\Js_i(\cdot)$.

First, consider the case that $\xess_i\geq 1/R_0$.
Then
\begin{align}
\Js(\xess_i) 	&= \frac{C}{\xess_i} + p(\xess_i) 	\nonumber \\
			&\leq \frac{C}{\xess_i} + 1 		\nonumber \\
			&\leq CR_0 + 1.
\end{align}

Next, consider the case that $\xess_i< 1/R_0$.
Because $\xess$ is an ESS,~\eqref{eq:ess} gives that the derivative of $\Js$ at $\xess_i$ must be positive, or
\begin{align}
0 	&< \frac{\partial}{\partial \xess_i}\Js(\xess_i) \nonumber \\
	&= -\frac{C}{(\xess_i)^2} + \frac{\partial}{\partial\xess_i}p(\xess_i) \nonumber \\
	&\leq -\frac{C}{(\xess_i)^2}  + R_0, \label{eq:xlb}
\end{align}
Where the last inequality is due to Lemma~\ref{lem:pprime}.
Note that~\eqref{eq:xlb} and preceding can be used to deduce the following lower bound on $\xess_i$:
\begin{equation}
\xess_i > \sqrt{\frac{C}{R_0}}.
\end{equation}
Finally, this may be used to upper-bound $J(\xess)$:
\begin{align}
\Js(\xess_i) 	&< C\sqrt{\frac{R_0}{C}} + p(\xess_i) \nonumber \\
			&\leq \sqrt{CR_0} + 1 \nonumber \\
			&\leq \max\{2, CR_0 + 1\},
\end{align}
completing the proof.
\end{proof}

\addtolength{\textheight}{-6cm}

\subsection{The Worst-Case Cost of Equilibria}

Finally, we are prepared to state our main theorems, showing that selfish populations have a bounded price of anarchy but that altruistic populations do not.

\begin{theorem}\label{thm:selfishpoa}
For every game pandemic location game $G=(R_0,\eta,C)$, the social cost of the worst selfish ESS is bounded above by a constant factor of optimal:
\begin{equation} \label{eq:poa self}
\poa(G) \leq \frac{3}{C} + R_0.
\end{equation}
\end{theorem}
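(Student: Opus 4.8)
The plan is to bound the ratio $J(\xess)/J^*(G)$ by controlling the numerator and the denominator separately. The numerator is already handled: Lemma~\ref{lem:essUB} gives that every selfish ESS satisfies $J(\xess) \leq \max\{2,\, CR_0+1\}$, a bound that is notably independent of the initial infected fraction $\eta$. Hence the only remaining work is to produce a lower bound on the optimal social cost $J^*(G)$ compatible with this numerator, and then to combine the two.

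For the lower bound I would argue directly from the cost expression~\eqref{eq:Jdef}, namely $J(x) = C|N(x)| + \sum_{i\in N(x)} R_i^\infty(x_i)$. Since the entire unit-mass population must be placed somewhere, at least one location carries positive density, so $|N(x)| \geq 1$ for every allocation $x$. Combining this with the fact that $R_i^\infty(x_i) \geq 0$ for all $i$ (no location can have a negative number of eventual infections), I obtain $J(x) \geq C$ for every $x$, and therefore $J^*(G) = \inf_x J(x) \geq C$.

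With both pieces in hand, the conclusion follows by a short computation. Since every $\xess\in\ess(G)$ satisfies $J(\xess) \leq \max\{2,\, CR_0+1\}$ while $J^*(G) \geq C$, I have
\begin{equation*}
\poa(G) = \sup_{\xess\in\ess(G)} \frac{J(\xess)}{J^*(G)} \leq \frac{\max\{2,\, CR_0+1\}}{C}.
\end{equation*}
Since $\max\{2,\, CR_0+1\}/C = \max\{2/C,\ R_0 + 1/C\}$ and both arguments of this maximum are at most $R_0 + 3/C$ (using $R_0>0$ together with $2/C \leq 3/C$ and $1/C \leq 3/C$), the bound~\eqref{eq:poa self} follows immediately.

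I do not expect any single step to be a serious obstacle, since the heavy analytic work was already done in Lemma~\ref{lem:essUB}. The only point requiring care is the choice of lower bound for $J^*(G)$: because the ESS upper bound is uniform in $\eta$, the matching denominator bound must also be $\eta$-robust, and the cleanest such bound comes from the location-count term $C|N(x)|$ rather than from the infection term $\sum_i R_i^\infty(x_i)$, whose contribution can vanish as $\eta\to0$. Recognizing that $|N(x)|\geq1$ already suffices is therefore the key conceptual step, and everything else reduces to the elementary case analysis above.
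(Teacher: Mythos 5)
Your proposal is correct and follows essentially the same route as the paper's own proof: invoke Lemma~\ref{lem:essUB} for the numerator, lower-bound $J^*(G)\geq C$ via the $C|N(x)|$ term in~\eqref{eq:Jdef}, and divide. The only difference is that you spell out the final elementary inequality $\max\{2,\,CR_0+1\}/C \leq 3/C + R_0$, which the paper leaves implicit.
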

\vspace{2mm}

\begin{proof}
Let $x^*$ denote a global minimizer of $J(x)$, with an associated number of used locations $N^*$.
Note that regardless of the structure of $x^*$, due to~\eqref{eq:Jdef} it satisfies
\begin{equation}
J(x^*)\geq C.
\end{equation}
Applying Lemma~\ref{lem:essUB} yields that
\begin{align}
\poa(G) 	&\leq \frac{\max\{2,1+CR_0\}}{C} \nonumber \\
		&\leq \frac{3}{C} + R_0
\end{align}
as desired, completing the proof.
\end{proof}


\begin{theorem}\label{thm:altpoa}
For every pandemic location game $G$, the social cost of the worst altruistic ESS is unbounded:
\begin{equation}
\poa(G^{\rm a}) = \infty.
\end{equation}
\end{theorem}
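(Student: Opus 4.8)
The plan is to exhibit an explicit sequence of altruistic ESS whose social cost diverges while the optimal cost $J^*(G)$ stays fixed and finite. Since $\poa(G^{\rm a})$ is the supremum of the ratios $J(x)/J^*(G)$ over all altruistic ESS, a single such diverging sequence forces the supremum to be $+\infty$, and the game instance $G$ is otherwise arbitrary, so this establishes the claim for every $G$.

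First I would record that $J^*(G)$ is a fixed, finite, positive constant. By~\eqref{eq:Jdef} every allocation satisfies $J(x)\geq C>0$, while the single-location allocation $x_1=1$ gives $J = C + R_1^\infty(1)$, which is finite because $R_1^\infty(1)\in[0,1]$. Hence $0<C\leq J^*(G)\leq C+1<\infty$, and this value does not depend on the ESS under consideration, so it may be treated as a constant in what follows. Next I would invoke Proposition~\ref{prop:alteq}, Point~\ref{point:alt2}: for every sufficiently large $n$ the uniform allocation placing density $1/n$ at each of $n$ locations is an altruistic ESS. (Concretely, taking $1/n<a$ makes the stability condition~\eqref{eq:ess} hold by Lemma~\ref{lem:altruistic}, and $J_i^{\rm a}(1/n)=\partial R_i^\infty(1/n)/\partial x_i\to\eta<C+\eta$ as $n\to\infty$ makes the Nash incentive constraint~\eqref{eq:alt incentive constraint} hold.) For any such ESS $x^{(n)}$, equation~\eqref{eq:Jdef} together with the nonnegativity of $R_i^\infty$ gives
\[
J(x^{(n)}) = Cn + \sum_{i=1}^{n} R_i^\infty(1/n) \;\geq\; Cn.
\]
Therefore $J(x^{(n)})/J^*(G)\geq Cn/J^*(G)$, which tends to $+\infty$ as $n\to\infty$, and taking the supremum over altruistic ESS yields $\poa(G^{\rm a})=\infty$.

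I expect the only delicate point to be verifying that these uniform allocations are genuinely ESS for all large $n$ — that is, that the stability condition~\eqref{eq:ess} and the Nash condition~\eqref{eq:wardrop} of the ESS definition hold \emph{simultaneously} — but this is exactly the content of Proposition~\ref{prop:alteq}, so it can be cited rather than re-derived. The remaining steps are immediate: the divergence is driven entirely by the additive switching-cost term $C|N(x)|$ in the social cost~\eqref{eq:Jdef}, so the conclusion holds regardless of the disease parameters $R_0$ and $\eta$ (indeed even when $\eta=0$ and there is no disease at all).
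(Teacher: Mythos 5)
Your proposal is correct and follows essentially the same route as the paper's proof: upper-bound $J^*(G)$ by $C+1$ using the single-location allocation, lower-bound the cost of an altruistic ESS with $n$ used locations by $Cn$ via the $C|N(x)|$ term in~\eqref{eq:Jdef}, and invoke Proposition~\ref{prop:alteq} (Point~\ref{point:alt2}) to let $n\to\infty$. Your added remarks (the lower bound $J^*(G)\geq C$ and the sketch of why the uniform allocations satisfy both ESS conditions) are harmless elaborations of steps the paper handles by citation.
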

\vspace{2mm}

\begin{proof}
Let $G^{\rm a}$ be an altruistic game.
Note that the optimal allocation for $G$ 
\begin{align}
    J^*(G)  &\leq J((x_1=1)) \\
            &\leq C + 1,
\end{align}
since by definition it holds that $R_i^\infty(1)\leq1$.
Now, let $J^{\rm a}(K)$ denote the social cost of an altruistic ESS $\xess$ with $|N(\xess)|=K$.
It must be the case that
\begin{equation}
    J^{\rm a}(K) \geq KC.
\end{equation}
Thus, 
\begin{equation}
    \poa(G^{\rm a}) \geq \frac{KC}{C+1}.
\end{equation}
Due to Proposition~\ref{prop:alteq} (Item~\ref{point:alt2}), $K$ can be selected as large as desired, completing the proof.
\end{proof}

\section{Conclusion}

In the context of a simple game-theoretic model of a pandemic, this paper illustrates that uncoordinated altruism may be as bad as (and possibly considerably worse than) uncoordinated selfishness.
These results hint at a core policy challenge in pandemic management: even individuals who are cognitively capable of assessing their own impact on others (i.e., altruists in our model) may make choices which ultimately lead to inefficiencies, once the choices of others are taken into account.
That is, centralized coordination may be a key requirement of effective pandemic management.





\bibliographystyle{ieeetr} 
\bibliography{library}

\end{document}